\newcommand{\R}{\mathbb{R}}
\renewcommand{\P}{\mathbb{P}}
\newcommand{\E}{\mathbb{E}}
\newcommand{\Acal}{{\mathcal A}}
\newcommand{\Fcal}{{\mathcal F}}
\newcommand{\Rcal}{{\mathcal R}}
\theoremstyle{plain}
\newtheorem{lem}{Lemma}
\newtheorem{prop}[lem]{Proposition}
\newtheorem{assum}[lem]{Assumption}
\newtheorem{eg}[lem]{Example}
\newtheorem{theorem}[lem]{Theorem}
\newtheorem{remark}[lem]{Remark}
\numberwithin{equation}{section}
\numberwithin{lem}{section}
\renewenvironment{proof}[1][\proofname] {\par\pushQED{\qed}\normalfont\topsep6\p@\@plus6\p@\relax\trivlist\item[\hskip\labelsep\bfseries#1\@addpunct{.}]\ignorespaces}{\popQED\endtrivlist\@endpefalse}
\setlist[enumerate]{leftmargin=.5in}
\setlist[itemize]{leftmargin=.5in}
\title{Consumption-Investment Problem in Rank-Based Models
}
\author{David Itkin\footnote{Department of Statistics, London School of Economics and Political Science, \href{mailto:d.itkin@lse.ac.uk}{d.itkin@lse.ac.uk}.}}
\begin{document}

\maketitle

\begin{abstract}
We study a consumption-investment problem in a multi-asset market where the returns follow a generic rank-based model. 
Our main result derives an HJB equation with Neumann boundary conditions for the value function and proves a corresponding verification theorem. The control problem is nonstandard due to the discontinuous nature of the coefficients in rank-based models, requiring a bespoke approach of independent mathematical interest. The special case of first-order models, prescribing constant drift and diffusion coefficients for the ranked returns, admits explicit solutions when the investor is either (a) unconstrained, (b) abides by open market constraints or (c) is fully invested in the market. The explicit optimal strategies in all cases are related to the celebrated solution to Merton's problem, despite the intractability of constraint (b) in that setting. 
\end{abstract}


\paragraph*{Keywords:}
Optimal portfolio choice, Merton's problem, rank-based model, reflected stochastic differential equation, Hamilton--Jacobi--Bellman equation, Neumann boundary conditions, open markets.

\paragraph*{MSC 2020 Classification:}
60G44, 91G10, 93E20,

\section{Introduction}
We study a consumption-investment problem in a large dimensional equity market with a risk-free asset and $d$ equities $X=(X_1,\dots,X_d)$ available for investment. The problem is characterized by the value function
\begin{equation} \label{eqn:value_func}
v(t,x,w) = \sup_{(\pi,c) \in \Acal_T^\circ(t,x,w)} \E_{t,x}\bigg[\int_t^T e^{-\beta s}U_1(c(s))ds + U_2(V^w_{\pi,c}(T))\bigg], 
\end{equation}
which seeks to maximize an investor's utility from consumption and terminal wealth (all of the ingredients in \eqref{eqn:value_func} are precisely defined in Section~\ref{sec:consumption_investment}). Problems of this type have a rich history going back to Merton \cite{Merton1969} who solved the problem explicitly with power utility and when the asset process $X$ is a geometric Brownian motion (GBM). 

Here, we revisit this problem when the stock returns follow a \emph{rank-based} model. These are reduced-form models where each asset's returns at a given time depend on the rank in the market the company occupies at that time (see \eqref{eqn:X_SDE} below). Rank-based models have been shown to reproduce certain features of large equity markets that standard models are unable to capture, such as the empirical stability of capital distribution in the market over long periods of time \cite{fernholz2002stochastic,itkin2024calibrated}. Moreover, the rank-based drift parameters can be efficiently estimated through a, so-called, collision estimator (see \cite[Chapter~5.4]{fernholz2002stochastic}) providing a path to drift parameter calibration, which is a notoriously difficult problem for standard name-based models. On the other hand, rank-based models inherently lead to discontinuous coefficients for the returns process leading to difficulties applying the standard stochastic optimal control machinery to characterize the value function \eqref{eqn:value_func}.

Our main result overcomes this difficulty by deriving the Hamilton--Jacobi--Bellman (HJB) equation that the value function satisfies, \eqref{eqn:HJB_ranked}, together with the appropriate Neumann boundary conditions, \eqref{eqn:neumann_condition}, and proving a corresponding verification result, Theorem~\ref{thm:verification}. When $X$ follows the \emph{first-order model} of \cite[Chapter~5.5]{fernholz2002stochastic}, prescribing constant drift and diffusion coefficients for the ranked returns, it turns out that an explicit solution exists. The optimal rule involves (i) the celebrated Merton fraction
\[\widetilde \pi^* = \widetilde a^{-1}(\widetilde \mu-r{\bf 1}_d),\]
where $\widetilde a$ is the covariance matrix of the ranked returns, $\widetilde \mu$ is the drift of the ranked returns and $r$ is the risk-free rate and (ii) the same optimal feedback form consumption rule $c^*$ as in Merton's problem. The key difference is that the fraction $\widetilde \pi^*$ specifies the optimal proportion of wealth the investor should hold in the assets according to the rank they occupy, rather than the company name. 

Additionally, in Section~\ref{sec:open_markets}, we study \emph{open market} constraints, which have recently gained attention in the literature \cite{itkin2024open,karatzas2021open}. These constraints only allow investment in certain ranks and, as such, can serve as a proxy for investors who restrict their investment universe to companies of a certain size (e.g.\ large cap equities, mid-cap equities, etc.). Remarkably, in the case of first-order models we obtain entirely explicit optimal solutions, whereas the corresponding open market constrained problem in the standard GBM setting is intractable. Section~\ref{sec:fully_invested} derives further explicit optimal strategies when the investor is additionally required to be fully invested in the market.

On the mathematical side, we characterize the value function for this problem by relating it to an HJB equation with Neumann boundary conditions. Equations of this type on general domains were recently studied in \cite{ren2019probabilistic} and shown to characterize the value function with a \emph{reflected} diffusion, corresponding here to the ranked capitalizations $X_{()}$, as one of the state variables. In contrast, the named capitalizations $X$ are state variables in \eqref{eqn:value_func}. As such, the class of admissible controls in \cite{ren2019probabilistic} are progressively measurable with respect to the filtration generated by $X_{()}$, while in our problem the filtration is larger incorporating information about $X$. Our main result, Theorem~\ref{thm:verification}, establishes that the value functions for these two problems coincide despite the fact that the optimal allocations $\pi^*$ for the problem \eqref{eqn:value_func} are not adapted to the filtration generated by $X_{()}$.

The paper is organized as follows. Section~\ref{sec:consumption_investment} introduces the model and the consumption-investment problem. In Section~\ref{sec:dynamic_programming} we study the associated HJB equation. It is heuristically derived in Section~\ref{sec:heuristic} and the verification theorem is established in Section~\ref{sec:verification}. Finally in Section~\ref{sec:first_order} we consider first-order models with power utility preferences, explicitly solving the unconstrained problem in Section~\ref{sec:unconstrained}, the open market constrained problem in \ref{sec:open_markets} and the fully invested constrained problem in \ref{sec:fully_invested}.

\paragraph{Notation} For $d \geq 2$ and a vector $x \in (0,\infty)^d$, we write $x_{()}$ for its decreasing order statistics, which is the permutation of $x$ satisfying $x_{(1)} \geq x_{(2)} \geq \dots \geq x_{(d)}$. The set of all ordered vectors is defined as
\[(0,\infty)^d_{\geq} = \{y \in (0,\infty)^d: y_1 \geq y_2 \geq \dots \geq  y_d\}.\]
The subset of $(0,\infty)^d_{\geq}$ consisting of all points where no coordinates are equal will be denoted by $(0,\infty)^d_{>}$ and its complement by $(0,\infty)^d_{=} = (0,\infty)^d_{\geq} \setminus (0,\infty)^d_>$. 

For every $i \in \{1,\dots,d\}$, we define the rank identifying function $\Rcal_i:(0,\infty)^d \to \{1,\dots,d\}$ via $\Rcal_i(x) = k$ if $x_i = x_{(k)}$. To ensure this is well-defined we break ties using lexicographical ordering. That is, $\Rcal_i(x) = \min\{ k \in \{1,\dots,d\}: x_i = x_{(k)}\}$ though the precise tie-breaking rule does not affect the results in this paper.
\section{The consumption-investment problem} \label{sec:consumption_investment}
\subsection{Financial market} We consider a financial market consisting  of a risk-free asset $dX_0(t) = rX_0(t)dt$
with risk-free rate rate $r \in \R$ and $d$ risky assets with market capitalizations $X = (X_1,\dots,X_d)$. We take a rank-based model for the risky assets,
\begin{equation} \label{eqn:X_SDE}
    \frac{dX_i(s)}{X_i(s)} = \mu_i(s,X(s))dt + \sum_{j=1}^d\sigma_{ij}(s,X(s))dW_j(s), \quad X_i(t) = x_i; \qquad i=1,\dots,d, \quad s \geq t,
\end{equation}
for any initial time $ t \geq 0$ and initial value $x \in (0,\infty)^d$,
where 
\[\mu_i(t,x) = \sum_{k=1}^d \widetilde \mu_k(t,x_{()})1_{\{\Rcal_i(x) = k\}}, \qquad \sigma_{ij}(t,x) = \sum_{k,\ell=1}^d \widetilde \sigma_{k\ell}(t,x_{()})1_{\{\Rcal_i(x) = k, \Rcal_j(x) = \ell\}}\]
for some functions $\widetilde \mu_k,\widetilde \sigma_{k\ell}:[0,\infty)\times (0,\infty)^d \to \R$
and where $W$ is a standard Brownian motion. In this model an asset's dynamics only depend on which rank that asset occupies at any time. We impose the following assumption on the inputs
\begin{assum} \label{assum:coefficients}
$\widetilde \mu$ and $\widetilde \sigma$ are bounded and $\widetilde \sigma$ is uniformly elliptic. They additionally satisfy the following uniform Lipschitz condition,
    \begin{equation}  \label{eqn:lipschitz_coefficients}
        \|y_1\circ \widetilde \mu(t,y_1) - y_2\circ \widetilde \mu(t,y_2)\| +\|\mathrm{diag}(y_1)\widetilde \sigma(t,y_1) - \mathrm{diag}(y_2)\widetilde \sigma(t,y_2)\| \leq L\|y_1-y_2\|, 
    \end{equation}
    for some $L > 0$,  all $y_1,y_2 \in (0,\infty)^d_{\geq}$ and $t \geq 0$. Here $\circ$ denotes elementwise product and $\mathrm{diag}(y)$ is the $d \times d$ matrix with $y$ on the diagonal and zeros elsewhere.
\end{assum}
Despite the high regularity of $\widetilde \mu$ and $\widetilde \sigma$, the coefficients $\mu$ and $\sigma$ will typically be discontinuous and fail to be weakly differentiable. As such, we cannot expect strong solutions or uniqueness in law to \eqref{eqn:X_SDE}, but we will nevertheless 
be able to work with weak solutions. In contrast, the coefficients of the ranked capitalizations are quite regular. Indeed, the result \cite[Theorem~2.5]{banner2008local}\footnote{Note that the semimartingale decomposition in their Theorem~2.5 is applicable to the process $X_{()}$ here due to the remark preceding the statement of the theorem.} ensures that the ranked process $Y = X_{()}$ satisfies
\begin{equation} \label{eqn:RSDE}
    dY_k(s) =Y_k(s)\Big(\widetilde \mu_k(s,Y(s))ds + \sum_{\ell=1}^d \widetilde \sigma_{k\ell}(s,Y(s))dB_\ell(s)\Big) + d\Phi_k(s), \quad Y_k(t) = y_k,
\end{equation}
for $k=1,\dots,d$, $s \geq t$, where $y_k = x_{(k)}$, $B_\ell(s) = \int_t^s\sum_{j=1}^d1_{\{\Rcal_j(X(u)) = \ell\}}dW_j(u)$ and $\Phi$ is the reflection term given by 
\begin{equation} \label{eqn:reflection_term} 
\Phi_k(s) = \frac{1}{2}\sum_{\ell=k+1}^d \int_t^s\frac{1}{N_k(u)}dL_{Y_k-Y_{\ell}}(u) - \frac{1}{2}\sum_{\ell=1}^{k-1}\int_t^s\frac{1}{N_k(u)}dL_{Y_\ell-Y_k}(u); \qquad k=1,\dots,d, \end{equation} with $N_k(u) = |\{j \in \{1,\dots,d\}: X_j(u) = X_{(k)}(u)\}|$ and $L_{Y_k-Y_\ell}$ being the semimartingale local time at zero of $Y_k-Y_\ell$. In particular, $B$ is a standard Brownian motion so \eqref{eqn:RSDE} is a reflected stochastic differential equation (RSDE) with normal reflection on the domain $(0,\infty)^d_{\geq}$.
\begin{prop} \label{prop:SDE}
    Let Assumption~\ref{assum:coefficients} hold. Then 
    \begin{enumerate}[noitemsep]
        \item \label{item:weak_solution} there exists a weak solution to \eqref{eqn:X_SDE} for every initial $t \geq 0$, $x \in (0,\infty)^d$,
        \item \label{item:RSDE} there exists a pathwise unique strong solution $(Y,\Phi)$ to \eqref{eqn:RSDE} for every initial $ t\geq 0$, $y \in (0,\infty)^d_{\geq}$. The solution is a strong Markov process and satisfies the moment bound
         \begin{equation} \label{eqn:Lp_RSDE}
        \widetilde \E_{t,y}\bigg[\sup_{t \leq s\leq T} \|Y(s)\|^p\bigg] \leq C_{T,p}(1+\|y\|^p) 
    \end{equation}
    for any $T \geq t$, $p \geq 1$ and some constant $C_{T,p} >0$. Here $\widetilde \E_{t,y}[\cdot]$ denote expectation with respect to the law of $Y$ when the initial value is $Y(t) = y$.
    \end{enumerate}
\end{prop}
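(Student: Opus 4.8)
The statement has two parts, and the plan is to establish part~\ref{item:RSDE} on the back of part~\ref{item:weak_solution}. For part~\ref{item:weak_solution} I would pass to logarithmic coordinates to remove the unboundedness of the coefficients of \eqref{eqn:X_SDE}: setting $\xi_i:=\log X_i$, It\^o's formula shows that $X$ solves \eqref{eqn:X_SDE} precisely when $\xi=(\xi_1,\dots,\xi_d)$ solves the additive SDE $d\xi(s)=\hat b(s,\xi(s))\,ds+\hat\sigma(s,\xi(s))\,dW(s)$ on $\R^d$, where $\hat\sigma(s,\xi):=\sigma(s,e^{\xi})$ with $e^\xi:=(e^{\xi_1},\dots,e^{\xi_d})$ and $\hat b_i(s,\xi):=\mu_i(s,e^\xi)-\tfrac12\sum_j\sigma_{ij}(s,e^\xi)^2$. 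By Assumption~\ref{assum:coefficients} the functions $\widetilde\mu,\widetilde\sigma$ — hence $\mu,\sigma$, hence $\hat b,\hat\sigma$ — are bounded and Borel measurable; and since for each $(s,x)$ the matrix $\sigma(s,x)$ is obtained from $\widetilde\sigma(s,x_{()})$ by conjugation with a permutation matrix, $\hat\sigma\hat\sigma^\top$ has the same spectrum as $\widetilde\sigma\widetilde\sigma^\top$ and is therefore uniformly elliptic (and bounded). Weak existence for SDEs on $\R^d$ with bounded measurable drift and bounded measurable uniformly elliptic diffusion matrix is classical: mollify the coefficients, note the laws of the smooth approximations are tight, and pass to the limit in the martingale problem using Krylov's $L^{d+1}$-estimates. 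The resulting $\xi$ does not explode (bounded coefficients), so $X:=(e^{\xi_1},\dots,e^{\xi_d})$ takes values in $(0,\infty)^d$ and is, by It\^o's formula, a weak solution of \eqref{eqn:X_SDE} driven by the same Brownian motion, for every initial $(t,x)$.

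For part~\ref{item:RSDE}, weak existence of \eqref{eqn:RSDE} is then free: given $t\ge0$ and $y\in(0,\infty)^d_{\ge}$, the weak solution $X$ of \eqref{eqn:X_SDE} with $X(t)=y$ is a continuous semimartingale with uniformly elliptic covariation, so \cite[Theorem~2.5]{banner2008local} applies (cf.\ the footnote) and $Y:=X_{()}$ satisfies \eqref{eqn:RSDE} with the stated reflection term, $B$ being a standard Brownian motion by L\'evy's characterisation since $d\langle B_\ell,B_m\rangle_s=\sum_j 1_{\{\Rcal_j(X(s))=\ell\}}1_{\{\Rcal_j(X(s))=m\}}\,ds=\delta_{\ell m}\,ds$; and every $y\in(0,\infty)^d_{\ge}$ arises this way (take $x=y$). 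The crux is then pathwise uniqueness, which I would obtain by recognising \eqref{eqn:RSDE} as a \emph{normal} reflection on the \emph{convex} cone $\overline{\mathcal D}:=\{z\in\R^d:z_1\ge\cdots\ge z_d\}$. Indeed, on the support of each local time $dL_{Y_i-Y_j}$ with $i<j$ one has $Y_i=\cdots=Y_j$ and hence $N_i=N_j$, so \eqref{eqn:reflection_term} shows $d\Phi$ is there a nonnegative multiple of $e_i-e_j$, while $\langle e_i-e_j,\,z-y\rangle=z_i-z_j\ge0$ for all $z\in\overline{\mathcal D}$ whenever $y_i=y_j$ — i.e.\ $e_i-e_j$ points into $\overline{\mathcal D}$. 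Consequently, for two solutions $(Y,\Phi)$ and $(Y',\Phi')$ on a common space with the same Brownian motion and the same initial value, convexity yields $\langle Y-Y',d\Phi\rangle\le0$ and $\langle Y'-Y,d\Phi'\rangle\le0$; applying It\^o's formula to $\|Y-Y'\|^2$ and using the Lipschitz bound \eqref{eqn:lipschitz_coefficients} on $y\mapsto y\circ\widetilde\mu(s,y)$ and $y\mapsto\diag(y)\widetilde\sigma(s,y)$ gives $d\|Y-Y'\|^2\le C\|Y-Y'\|^2\,ds+dM_s$ for a local martingale $M$, and Gronwall's inequality after localisation forces $Y\equiv Y'$ and then $\Phi\equiv\Phi'$. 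Weak existence together with pathwise uniqueness give, by the Yamada--Watanabe theorem (in its reflected-SDE version), a pathwise unique strong solution and uniqueness in law; the strong Markov property follows by the usual flow argument, the Lipschitz data making the solution depend continuously on the starting point.

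Finally, for the moment bound \eqref{eqn:Lp_RSDE} I would fix an even integer $2m\ge p$ and apply It\^o's formula to $\|Y(s)\|^{2m}$. The finite-variation contribution of the reflection is a multiple of $\langle Y(s),d\Phi(s)\rangle$, and this \emph{vanishes identically}: grouping the terms of \eqref{eqn:reflection_term} by local time, the coefficient of $dL_{Y_i-Y_j}$ in $\langle Y,d\Phi\rangle$ is $\tfrac{Y_i}{2N_i}-\tfrac{Y_j}{2N_j}$, which is $0$ on its support because $Y_i=Y_j$ and $N_i=N_j$ there. The drift and martingale parts are controlled by the linear growth $\|y\circ\widetilde\mu(s,y)\|\le\|\widetilde\mu\|_\infty\|y\|$ and $\|\diag(y)\widetilde\sigma(s,y)\|\le\|\widetilde\sigma\|_\infty\|y\|$ supplied by Assumption~\ref{assum:coefficients}, giving $d\|Y(s)\|^{2m}\le C_m\|Y(s)\|^{2m}\,ds+dM_s$ for a local martingale $M$; localisation, Gronwall's inequality, and the Burkholder--Davis--Gundy inequality (to pass to the supremum over $[t,T]$) then yield $\widetilde\E_{t,y}[\sup_{t\le s\le T}\|Y(s)\|^{2m}]\le C_{T,m}(1+\|y\|^{2m})$, whence \eqref{eqn:Lp_RSDE} for all $p\ge1$ via Jensen's inequality. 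The step I expect to be the real obstacle is the structural one in the second paragraph — seeing past the somewhat intricate form of $\Phi$ to the facts that \eqref{eqn:RSDE} is a normal reflection on the convex cone $\overline{\mathcal D}$ and that this forces $\langle Y-Y',d\Phi\rangle\le0$ and $\langle Y,d\Phi\rangle=0$; once these are in hand the remaining arguments are routine, and the only genuinely external ingredient is the ranked semimartingale decomposition \cite[Theorem~2.5]{banner2008local}.
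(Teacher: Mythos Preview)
Your proposal is correct. Part~\ref{item:weak_solution} is handled exactly as in the paper: pass to log-coordinates so that the SDE has bounded measurable drift and bounded, uniformly elliptic diffusion, then invoke the classical weak existence theory (the paper simply cites \cite[Theorem~2.6.1]{krylov1980controlled}, while you sketch the mollification/tightness argument behind it).

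For part~\ref{item:RSDE} you take a somewhat different, more self-contained route. The paper obtains strong existence and pathwise uniqueness for \eqref{eqn:RSDE} in one stroke by invoking Tanaka's theorem \cite[Theorem~4.1]{tanaka1979stochastic} for normally reflected SDEs on convex domains with Lipschitz data (which is precisely what Assumption~\ref{assum:coefficients} supplies), and then cites \cite{stroock1979multidimensional} for the strong Markov property and \cite{ren2019probabilistic} for the moment bound. You instead (a) extract weak existence for \eqref{eqn:RSDE} from part~\ref{item:weak_solution} via the ranked semimartingale decomposition \cite[Theorem~2.5]{banner2008local}, (b) prove pathwise uniqueness by hand --- recognising that $d\Phi$ lies in the inward normal cone of the convex set $\{z_1\ge\cdots\ge z_d\}$, so that $\langle Y-Y',d\Phi\rangle\le0$, and closing via It\^o--Gronwall --- and (c) conclude strong existence by Yamada--Watanabe. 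Your argument in (b) is essentially Tanaka's proof specialised to this polyhedral cone, and your observation that $\langle Y,d\Phi\rangle=0$ (because $Y_i=Y_j$ and $N_i=N_j$ on the support of $dL_{Y_i-Y_j}$) is exactly what makes the moment bound go through without any contribution from the reflection. The trade-off: the paper's proof is shorter and leans on established black boxes, while yours is more transparent about the mechanism and, pleasantly, recycles part~\ref{item:weak_solution} to get weak existence for the reflected equation rather than needing a separate construction.
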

\begin{proof}
For item \ref{item:weak_solution} it is equivalent to establish weak existence on $\R^d$ for $Z = (\log X_1,\dots \log X_d)$, which satisfies the SDE
    \[dZ(s) = \big((\mu\circ \exp)(s,Z(s)) + \frac{1}{2}(\mathrm{diag}(a \circ \exp)(s,Z(s))\big)ds + (\sigma \circ \exp)(s,Z(s))dW(s),\]
    where $a = \sigma \sigma^\top$ and $\exp(z) = (\exp(z_1),\dots,\exp(z_d))$. By Assumption~\ref{assum:coefficients}, the drift and volatility coefficients of $Z$ are measurable and bounded and the diffusion matrix is uniformly elliptic. Hence, \cite[Theorem~2.6.1]{krylov1980controlled} guarantees the existence of a weak solution.

    The strong existence and pathwise uniqueness of the RSDE follows from Assumption~\ref{assum:coefficients} courtesy of \cite[Theorem~4.1]{tanaka1979stochastic}. Both the strong Markov property and the moment bound \eqref{eqn:Lp_RSDE} are standard results with the former being a consequence of uniqueness in law (see \cite[Section~6.2]{stroock1979multidimensional}) and the latter following from Gronwall and Burkholder--Davis--Gundy inequalities (see e.g.\ the proof of \cite[Proposition~2.1]{ren2019probabilistic}).
\end{proof}

In view of Proposition~\ref{prop:SDE}, given an initial time $t\geq0$ and initial value $X(t) = x$  we can work on a filtered probability space $(\Omega,\Fcal,(\Fcal_s)_{s \geq t},\P_{t,x})$ satisfying the usual conditions and supporting a process $X$ satisfying \eqref{eqn:X_SDE}. Since $X_{()}$ solves \eqref{eqn:RSDE} with $y = x_{()}$, we have courtesy of the strong existence and pathwise uniqueness for the RSDE that the moment estimate \eqref{eqn:Lp_RSDE} holds for $X_{()}$. Moreover, $X_{()}$ is a strong Markov process and the law of the ranked capitalizations can be represented via the pushforward measure $\widetilde \P_{t,y} = x_{()}\#\P_{t,x}$. In particular, any distinct solutions $X$ to \eqref{eqn:X_SDE}, even if on different probability spaces, give rise to the same law for the ranked capitalization process $X_{()}$.

\subsection{The stochastic optimal control problem}
When starting with initial wealth $w > 0$, using the trading strategy $\pi(s) = (\pi_1(s),\dots,\pi_d(s))$ and consuming at rate $c(s) \geq 0$ for $s \geq t$ the investors wealth process $V_{\pi,c}^w$ evolves according to
\begin{align*}
    dV_{\pi,c}^w(s) & = \bigg(V_{\pi,c}^w(s)\Big(\sum_{i=1}^d \pi_i(s) \frac{dX_i(s)}{X_i(s)} + \pi_0(s)\frac{dX_0(s)}{X_0(s)}\Big) - c(s)\bigg)ds \\
    & =\Big(V^w_{\pi,c}(s)\big(\pi(s)^\top (\mu(s,X(s))-r{\bf 1}_d) + r\big)- c(s)\Big)ds + \pi(s)^\top \sigma(s,X(s))dW(s),
\end{align*}
where $\pi_0(s) = 1-\sum_{i=1}^d \pi_i(s)$ is the proportion of wealth invested in the risk-free asset and ${\bf 1}_d$ denotes the $d$-dimensional vector of all ones. Given a time horizon $T \in (0,\infty)$ we restrict our attention to strictly positive wealth processes
\[\Acal_T(t,x,w) = \bigg\{  (\pi(s),c(s))_{s \in [t,T]}:\begin{aligned} & \quad   (\pi,c)  \text{ 
are progressively measurable, } c\geq 0 \\
& \quad  \text{ and } V^w_{\pi,c} > 0, \P_{t,x}\text{-a.s.\ on } [t,T]\end{aligned}\bigg\}.\]
Our goal is to solve the investment-consumption problem characterized by the value function  \eqref{eqn:value_func} given in the introduction, where $\beta > 0$ is the patience parameter, $\Acal_T^\circ(t,x,w) \subset \Acal_T(t,x,w)$ is the set of admissible controls specified in Section~\ref{sec:control_set} and $U_1,U_2:(0,\infty) \to \R$ are utility functions, which we assume are concave and increasing. Utility functions on the real line or an infinite time horizon can be handled similarly, but we do not pursue this here.

\begin{remark}
    Due to the potential nonuniqueness in law to \eqref{eqn:X_SDE} it is not apriori clear that the value function $v$ depends only on $(t,x,w)$ and not the particular solution to \eqref{eqn:X_SDE}. Part of our main result, Theorem~\ref{thm:verification}, establishes that it is independent of the particular solution.
\end{remark}

\subsection{Control set} \label{sec:control_set}
Here we define the set of controls $\Acal_T^\circ(t,x,w)$ over which we optimize. We assume it is of the form
\begin{equation} \label{eqn:control_set}
\Acal_T^\circ(t,x,w) = \{(\pi(s),c(s))_{s\in[t,T]} \in \Acal_T(t,x,w): (\pi(s),c(s)) \in A(X(s),V_{\pi,c}^w(s)) \quad \forall s \in [t,T]\}
\end{equation}
for some nonempty measurable subsets $A(x,w) \subset \R^d \times [0,\infty)$, which specify constraints on the controls. Next we define the rank-based constraint sets
\begin{align*}
    \widetilde A(x,w) & = \bigg\{(\widetilde \pi,c) \in \R^d \times [0,\infty): \widetilde \pi_k = \sum_{i=1}^d \pi_i1_{\{\Rcal_i(x)=k\}} \quad  \forall k \text{ and some } (\pi,c) \in A(x,w)\bigg\}.
\end{align*}
 If at some point in time the investor's portfolio weights are $\pi \in \R^d$ then $\widetilde \pi_k$ is the proportion of wealth the investor holds in the asset currently occupying rank $k$. 
 The main condition we will impose is permutation invariance of the rank-based constraint sets.
\begin{assum} \label{ass:constraint_set}
 $\widetilde A(x,w) = \widetilde A(x_{()},w)$ for every $(x,w) \in (0,\infty)^{d+1}$.
\end{assum}
We now present a few examples of common constraints that satisfy Assumption~\ref{ass:constraint_set}.
\begin{eg} \label{eg:constraints}
\begin{enumerate}[noitemsep]
    \item \label{item:unconstrained} (Unconstrained).\ $A(x,w) = \R^d \times [0,\infty)$,
    \item  \label{item:long-only}  (Long-only).\ $A(x,w) = [0,1]^d \times [0,\infty)$,
    \item \label{item:fully_invested} (Fully invested).\ $A(x,w) = \{\pi \in \R^d: \pi_1 + \dots + \pi_d  =1\}\times [0,\infty)$,
     \item  \label{item:open_market}  (Open market).\ $A(x,w) = \{\pi \in \R^d: \pi_i = 0$ if $\mathcal{R}_i(x)\not \in [n,N]\} \times [0,\infty)$ for some  $1 \leq n \leq N \leq d$.
 \end{enumerate}   
 \end{eg}
 That items \ref{item:unconstrained}-\ref{item:fully_invested} satisfy Assumption~\ref{ass:constraint_set} is clear, while for \ref{item:open_market} we note that 
 \[\widetilde A = \{(\widetilde \pi, c) \in \R^d \times [0,\infty): \widetilde \pi_k = 0 \text{ if } k \not \in [n,N]\}\] is state independent and, hence, satisfies Assumption~\ref{ass:constraint_set}. An example of a constraint that does not satisfy Assumption~\ref{ass:constraint_set} is an asset based long-only constraint $A(x,w) = [0,\infty) \times \R^{d-1} \times [0,\infty)$, which prohibits short selling in asset one, but allows it in the other assets.
\section{Dynamic programming approach} \label{sec:dynamic_programming} In this section, we will work with the domains 
\[D = [0,T)\times (0,\infty)^d\times (0,\infty), \qquad  D_\geq = [0,T)\times (0,\infty)^d_{\geq}\times (0,\infty).\]
We will similarly write $D_>$ and $D_=$ as in $D_{\geq}$ with $(0,\infty)^d_{\geq}$ replaced by $(0,\infty)^d_>$ and $(0,\infty)^d_=$ respectively. When the right endpoint $\{T\}$ is included we will append the corresponding set with a bar, such as $\overline D$.
For a function $v:D \to \R$, we will write $\partial_t v$ for the derivative in the first argument, $\nabla v$ for the $(d+1)$-dimensional gradient in the last two arguments, $\nabla_x v$ for the gradient in the second argument and $\partial_w v$ for the derivative in the third argument. Similarly $\nabla^2 v$ will denote the $(d+1)\times (d+1)$ Hessian while  $\nabla_x^2 v$, $\partial_w \nabla_x v$ and $\partial_{ww}v$ will denote its respective components. For a function $\widetilde v: D_{\geq} \to \R$ we will use an analogous convention and refer by $\nabla_y \widetilde v$ and $\nabla_y^2 \widetilde v$ to the derivatives in the second variable. 
\subsection{Heuristic discussion} \label{sec:heuristic} The HJB associated to the value function \eqref{eqn:value_func} is 
\begin{equation} \label{eqn:HJB_v}
   \begin{cases} \displaystyle
   0 = (\partial_t + L^X) v(t,x,w) + \sup_{(\pi,c) \in A(x,w)}H_{\pi,c}(t,x,w,\nabla v(t,x,w),\nabla^2 v(t,x,w)), & \text{in } D, \\
   v(T,x,w) = U_2(w), &  \text{in } (0,\infty)^{d+1},
   \end{cases}
\end{equation}
where $L^Xv = \sum_{i=1}^d x_i\mu_i \partial_{x_k}v + \frac{1}{2}\sum_{i,j=1}^d x_ix_ja_{ij}\partial_{x_ix_j}v$ and the Hamiltonian is given by 
\begin{align*}
    H_{\pi,c}(t,x,w,\xi,M)  & =   w\xi_{d+1}(r+\pi^\top (\mu(t,x)-r{\bf 1}_d))
     + w \pi^\top a(t,x)M_{d+1,1:d}\\
     & \qquad + \frac{w^2}{2}M_{d+1,d+1}\pi^\top a(t,x)\pi - \xi_{d+1}c + e^{-\beta t}U_1(c)
\end{align*}
for $(t,x,w,\xi,M) \in D \times \R^{d+1} \times \R^{(d+1)\times (d+1)}$
and where $a(t,x) = \sigma(t,x)\sigma(t,x)^\top$ and $M_{d+1,1:d}$ denotes the first $d$ components in the $(d+1)$st row of $M$.
The HJB \eqref{eqn:HJB_v} is a second order nonlinear PDE where the coefficients $\mu$ and $a$ are, in general, only measurable and bounded. Due to lack of regularity, the standard viscosity solution theory is not directly applicable. 

To make progress we make the ansatz  
\begin{equation}\label{eqn:v_ansatz}
    v(t,x,w) = \widetilde v (t,x_{()},w)
\end{equation} for some function $\widetilde v:D_{\geq}\to \R$. For any $k$ the map $x \mapsto x_{(k)}$ is Lipschitz and differentiable outside the Lebesgue null-set $\{x:x_i = x_j \text{ for some } i \ne j\}$ with derivative given by $1_{\{\Rcal_i(x) = k\}}$. As such, by making this substitution into \eqref{eqn:HJB_v}, we formally expect $\widetilde v$ to satisfy 
\begin{equation} \label{eqn:HJB_ranked}
    \begin{cases}
        \displaystyle 0 = (\partial_t + L^Y) \widetilde v(t,y,w) + \sup_{(\widetilde \pi,c) \in \widetilde A(y,w)}\widetilde H_{\widetilde \pi,c}(t,y,w,\nabla \widetilde v(t,y,w),\nabla^2 \widetilde v(t,y,w)), &  \hfill \text{in } D_>, \\
        \widetilde v(T,y,w) = U_2(w), & \hspace{-1cm}\text{in }(0,\infty)^d_{\geq} \times (0,\infty),
    \end{cases} 
\end{equation}
where $L^Y \widetilde v = \sum_{k=1}^dy_k\widetilde \mu_k \partial_{y_k} \widetilde v  + \frac{1}{2}\sum_{k,\ell=1}^dy_ky_\ell\widetilde a_{k\ell}\partial_{y_ky_\ell}\widetilde v$  and
\begin{equation} \label{eqn:ranked_Hamiltonian}
\begin{aligned}
    \widetilde H_{\widetilde \pi,c}(t,y,w,\xi,M) & =   w\xi_{d+1}(r+\widetilde \pi^\top(\widetilde \mu(t,y)-r{\bf 1}_d)) 
   + w\widetilde \pi^\top \widetilde a(t,y)M_{d+1,1:d} \\
   & \qquad + \frac{w^2}{2}M_{d+1,d+1}\widetilde \pi^\top  \widetilde a(t,y)\widetilde \pi - \xi_{d+1}c + e^{-\beta t}U_1(c).
\end{aligned}
\end{equation}
The diffusion matrix $\widetilde a$ does not degenerate on the boundary $D_{=}$ so the equation \eqref{eqn:HJB_ranked} needs to be appended with boundary conditions. As $X_{()}$ satisfies an RSDE with normal reflection, we will impose Neumann boundary conditions on $\widetilde v$, 
\begin{equation} \label{eqn:neumann_condition}
    \nabla_y \widetilde v(t,y,w)^\top {\bf n}(y) = 0, \quad \text{for }(t,y,w) \in D_=
\end{equation}
for any normal vector ${\bf n}(y)$ at $y \in D_=$. The important role of the boundary condition \eqref{eqn:neumann_condition} will become apparent in the proof of the verification theorem in the next section.

\begin{remark} For any $y \in (0,\infty)^d_{=}$ with exactly two equal components $y_k = y_{k+1}$, the unique (up to multiplicative constant) normal vector is ${\bf n}(y) = e_k - e_{k+1}$. Points with three or more indices coinciding (e.g.\ $y_{k+1} = y_k = y_{k-1}$) admit additional normal vectors. If the process $Y = X_{()}$  of \eqref{eqn:reflection_term} is such that $L_{Y_k - Y_\ell} = 0$ whenever $k-\ell > 1$ then \eqref{eqn:neumann_condition} only needs to hold for normal vectors of the type $e_k-e_{k+1}$. A sufficient condition for these local times to vanish is if triple collisions of components of $X$ do not occur.
\end{remark}

\subsection{Verification theorem} \label{sec:verification}
HJB equations of the type \eqref{eqn:HJB_ranked} on fairly general domains with Neumann boundary conditions \eqref{eqn:neumann_condition} have recently been studied in \cite{ren2019probabilistic}. Under certain assumptions, the authors are able to establish the existence of a unique viscosity solution to \eqref{eqn:HJB_ranked} and \eqref{eqn:neumann_condition} and characterize it as the value function of a certain stochastic control problem. In our setting this value function is given by
\begin{equation} \label{eqn:value_func_ranked} 
\widetilde v(t,y,w) = \sup_{(\widetilde \pi,c) \in \widetilde \Acal_T^\circ(t,y,w)} \widetilde\E_{t,y}\bigg[\int_t^T U_1(c(s))ds + U_2(\widetilde  V^w_{\widetilde \pi,c}(T))\bigg],
\end{equation}
where $Y$ is as in \eqref{eqn:RSDE}, the process $\widetilde V^w_{\widetilde \pi,c}$ is given by
\[d\widetilde V^w_{\widetilde \pi,c}(s) = \Big(\widetilde V^w_{\widetilde \pi,c}(s)\big(\widetilde \pi(s)^\top (\widetilde \mu(s,Y(s))-r{\bf 1}_d) + r\big)- c(s)\Big)ds + \widetilde V^w_{\widetilde \pi,c}(s)\widetilde \pi(s)^\top \widetilde  \sigma(s,Y(s))dB(s),\] 
for $s \geq t$ with $\widetilde V^w_{\pi,c}(t) = w$ and $\widetilde \Acal_T^\circ(t,y,w)$ denotes the set of admissible control processes. This set is akin to \eqref{eqn:control_set} with $\widetilde A(y,w)$ in place of $A(x,w)$ except additionally restricted to consisting only of processes $(\widetilde \pi(s),c(s))_{s\in [t,T]}$ which are progressively measurable with respect to $\Fcal^B$.

It is far from clear if the value function $v$ satisfies the relationship \eqref{eqn:v_ansatz} with $\widetilde v$ solving \eqref{eqn:HJB_ranked}. Indeed, the representation \eqref{eqn:value_func_ranked} only allows for controls adapted to $\Fcal^B$, while the original filtration $\Fcal = \Fcal^{X,W}$ is larger. Nevertheless, we will show below that the relationship \eqref{eqn:v_ansatz} holds where $\widetilde v$ is a solution to \eqref{eqn:HJB_ranked} and \eqref{eqn:neumann_condition}. For clarity of exposition, and since it is sufficient for our main example in Section~\ref{sec:first_order}, we will prove a verification theorem in the case of a classical solution to the HJB.
\begin{theorem}[Verification theorem] \label{thm:verification} Let Assumptions~\ref{assum:coefficients} and \ref{ass:constraint_set} hold.
    Suppose there exists a nonnegative function $\widetilde v \in C^{1,2}(D_{\geq})\cap C(\overline D_{\geq})$ satisfying \eqref{eqn:HJB_ranked}, \eqref{eqn:neumann_condition} and the polynomial growth condition
    \begin{equation} \label{eqn:v_poly_growth} 
    |\widetilde v(t,y,w)| \leq C(1+\|y\|^p+|w|^p), \quad \text{for all }(t,y,w) \in  \overline D_{\geq},
    \end{equation} for some $C > 0$ and $p \geq 1$. Suppose, further, that there exists a measurable maximizer
    \begin{equation}  \label{eqn:optimal_control_argmax}
     (\widetilde \pi^*(t,y,w),\widetilde c^*(t,y,w)) \in {\arg\max}_{(\widetilde \pi, c) \in \widetilde \Acal(y,w)} \widetilde H_{ \widetilde \pi,c}(t,y,w,\nabla \widetilde v(t,y,w),\nabla^2 \widetilde v(t,y,w))
    \end{equation}
    for every $(t,y,w) \in D_\geq$
    satisfying the following uniform Lipschitz and linear growth conditions
    \begin{equation} \label{eqn:Lipschitz_control}
      \begin{aligned}  &\|w_1\widetilde \pi^*(t,y,w_1) - w_2\widetilde \pi^*(t,y,w_2)\| + |\widetilde c^*(t,y,w_1)-\widetilde c^*(t,y,w_2)| & \leq L|w_1-w_2|, \\
      &\|w_1\widetilde \pi^*(t,y,w_1)\|+ |c^*(t,y,w_1)| & \leq C(1+\|y\|+|w_1|)
      \end{aligned}
    \end{equation}
    for some $C,L > 0$ and all $y \in (0,\infty)^d_{\geq}$, $w_1,w_2 \in (0,\infty)$. Then the value function \eqref{eqn:value_func} is given by $v(t,x,w) = \widetilde v(t,x_{()},w)$ and an optimal strategy  $(\pi^*,c^*)$ is given in feedback form by
    \begin{equation} \label{eqn:optimal_control} 
    \begin{aligned}
      &\pi^*_i(t,X(t),V^w_{\pi^*,c^*}(t))  = \sum_{k=1}^d \widetilde \pi^*_k(t,X_{()}(t),V^w_{\pi^*,c^*}(t))1_{\{\Rcal_i(X(t)) = k\}}, & i=1,\dots,d, \quad t \geq 0,\\
        & c^*(t,X(t),V^w_{\pi^*,c^*}(t)) = \widetilde c^*(t,X_{()}(t),V^w_{\pi^*,c^*}(t)), & t\geq 0.
    \end{aligned}
    \end{equation}
\end{theorem}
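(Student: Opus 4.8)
The plan is a standard two-sided verification argument, with two non-routine features. First, one cannot apply Itô's formula to $v(s,X(s),V^w_{\pi,c}(s))$ directly, since $x\mapsto v(s,x,w)$ is only Lipschitz (not $C^1$) across the collision set; instead one works with $\widetilde v$ evaluated along the ranked process $Y:=X_{()}$, which by Proposition~\ref{prop:SDE} solves the well-behaved RSDE \eqref{eqn:RSDE}. So I would fix $(t,x,w)$, a weak solution $X$ of \eqref{eqn:X_SDE} on $(\Omega,\Fcal,(\Fcal_s),\P_{t,x})$, set $Y=X_{()}$ and let $B$ be the Brownian motion of \eqref{eqn:RSDE}, and take an arbitrary $(\pi,c)\in\Acal_T^\circ(t,x,w)$ with wealth $V:=V^w_{\pi,c}$. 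Writing $\widetilde\pi_k(s):=\sum_{i=1}^d\pi_i(s)1_{\{\Rcal_i(X(s))=k\}}$, the rank-based form of $\mu,\sigma$ yields the pathwise identities
\[\pi(s)^\top(\mu(s,X(s))-r{\bf 1}_d)=\widetilde\pi(s)^\top(\widetilde\mu(s,Y(s))-r{\bf 1}_d),\qquad \pi(s)^\top a(s,X(s))\pi(s)=\widetilde\pi(s)^\top\widetilde a(s,Y(s))\widetilde\pi(s),\]
\[\pi(s)^\top\sigma(s,X(s))\,dW(s)=\widetilde\pi(s)^\top\widetilde\sigma(s,Y(s))\,dB(s),\]
so $V$ solves the SDE of $\widetilde V^w_{\widetilde\pi,c}$ driven by $(Y,B)$, and $(\widetilde\pi(s),c(s))\in\widetilde A(X(s),V(s))=\widetilde A(Y(s),V(s))$ by Assumption~\ref{ass:constraint_set}. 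The second, conceptual, point — and the reason $v=\widetilde v$ despite $\widetilde\pi$ not being $\Fcal^B$-adapted — is that the supremum in \eqref{eqn:HJB_ranked} runs over the \emph{deterministic} set $\widetilde A(y,w)$, so the verification inequality needs only pointwise membership $(\widetilde\pi(s),c(s))\in\widetilde A(Y(s),V(s))$, not adaptedness to any particular filtration.

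The key step, which I expect to be the main obstacle, is to apply Itô's formula to $\widetilde v(s,Y(s),V(s))$ on $[t,T']$ for $T'<T$. Because $\widetilde v\in C^{1,2}(D_\geq)$ this is valid up to and including the boundary $D_=$, and the reflection term $\Phi$ of \eqref{eqn:RSDE} contributes an extra term $\int_t^{T'}\nabla_y\widetilde v(s,Y(s),V(s))^\top d\Phi(s)$. Substituting \eqref{eqn:reflection_term} and relabelling indices, this equals
\[\tfrac12\sum_{1\le k<\ell\le d}\int_t^{T'}\frac{1}{N_k(s)}\,\nabla_y\widetilde v(s,Y(s),V(s))^\top(e_k-e_\ell)\,dL_{Y_k-Y_\ell}(s)\]
(using $N_k=N_\ell$ on the support of $dL_{Y_k-Y_\ell}$, where a whole cluster of ranks collides), and this vanishes by the Neumann condition \eqref{eqn:neumann_condition}, since $dL_{Y_k-Y_\ell}$ charges only times at which $Y(s)\in D_=$ and $e_k-e_\ell$ is a normal vector there — this is precisely what \eqref{eqn:neumann_condition}, and the hypothesis $\widetilde v\in C^{1,2}(D_\geq)$ rather than merely $C^{1,2}(D_>)$, is for. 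After a routine computation matching the generator of $(Y,V)$ with the Hamiltonian \eqref{eqn:ranked_Hamiltonian}, the drift of $\widetilde v(s,Y(s),V(s))$ is $(\partial_t+L^Y)\widetilde v+\widetilde H_{\widetilde\pi(s),c(s)}(s,Y(s),V(s),\nabla\widetilde v,\nabla^2\widetilde v)-e^{-\beta s}U_1(c(s))$, which by \eqref{eqn:HJB_ranked} and $(\widetilde\pi(s),c(s))\in\widetilde A(Y(s),V(s))$ is bounded above by $-e^{-\beta s}U_1(c(s))$; hence, with $M$ a continuous local martingale null at $t$,
\[\widetilde v(T',Y(T'),V(T'))+\int_t^{T'}e^{-\beta s}U_1(c(s))\,ds\le\widetilde v(t,x_{()},w)+M(T').\]

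Next I would localize along $\tau_n:=\inf\{s\ge t:\|Y(s)\|+V(s)+V(s)^{-1}+\int_t^s\|\widetilde\pi(u)\|^2du\ge n\}\wedge(T-\tfrac1n)$, so that $M^{\tau_n}$ is a true martingale (on $[t,\tau_n]$ the relevant quantities, including $\nabla\widetilde v$, $\nabla^2\widetilde v$, are bounded), take $\E_{t,x}$, use $\widetilde v\ge0$, and let $n\to\infty$; then $\tau_n\uparrow T$ a.s., since $V$ is continuous and strictly positive on $[t,T]$ and $\int_t^T\|\widetilde\pi\|^2<\infty$ a.s.\ by admissibility. Using $\widetilde v\in C(\overline D_\geq)$ and $\widetilde v(T,\cdot,\cdot)=U_2(\cdot)$ one gets $\widetilde v(\tau_n,Y(\tau_n),V(\tau_n))\to U_2(V(T))$ a.s., and since the left side of the displayed inequality is bounded below by $-\int_t^Te^{-\beta s}U_1^-(c(s))\,ds$ (again because $\widetilde v\ge0$), Fatou's lemma yields $\E_{t,x}[\int_t^Te^{-\beta s}U_1(c(s))\,ds+U_2(V(T))]\le\widetilde v(t,x_{()},w)$. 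Taking the supremum over $\Acal_T^\circ(t,x,w)$ gives $v(t,x,w)\le\widetilde v(t,x_{()},w)$; since every quantity above was expressed through $Y=X_{()}$, whose law depends only on $x_{()}$, this bound is independent of the chosen weak solution.

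For the reverse inequality I would take $(\pi^*,c^*)$ as in \eqref{eqn:optimal_control}. By \eqref{eqn:Lipschitz_control}, the closed-loop wealth equation for $V^*:=V^w_{\pi^*,c^*}$ has coefficients that are Lipschitz and of linear growth in $w$ uniformly in $y$, so using the moment bound \eqref{eqn:Lp_RSDE} for $\sup_{[t,T]}\|Y\|$ it has a unique strong solution with moments of every order, which is strictly positive; hence $(\pi^*,c^*)\in\Acal_T^\circ(t,x,w)$, membership in the constraint set again following from Assumption~\ref{ass:constraint_set}. Repeating the Itô computation with $(\pi^*,c^*)$, the fact that $(\widetilde\pi^*(s,Y(s),V^*(s)),\widetilde c^*(s,Y(s),V^*(s)))$ realizes the argmax in \eqref{eqn:optimal_control_argmax} turns the HJB inequality into an equality, so the drift of $\widetilde v(s,Y(s),V^*(s))$ is exactly $-e^{-\beta s}U_1(c^*(s))$; now \eqref{eqn:Lipschitz_control} and the moment bounds make the local martingale a genuine martingale, so letting $T'\to T$ and using $\widetilde v(T,\cdot,\cdot)=U_2(\cdot)$ and \eqref{eqn:v_poly_growth} gives $\widetilde v(t,x_{()},w)=\E_{t,x}[\int_t^Te^{-\beta s}U_1(c^*(s))\,ds+U_2(V^*(T))]\le v(t,x,w)$. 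Combined with the upper bound this proves $v(t,x,w)=\widetilde v(t,x_{()},w)$ — in particular $v$ is independent of the weak solution — and the optimality of $(\pi^*,c^*)$. The localization and convergence arguments are routine; the genuine work is the reflection/Neumann identity above, and, preceding it, the reduction from $X$ to the ranked diffusion $Y$.
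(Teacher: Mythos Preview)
Your proposal is correct and follows essentially the same verification argument as the paper: apply It\^o's formula to $\widetilde v(s,X_{()}(s),V(s))$, use the HJB inequality together with the Neumann condition (which kills the reflection integral) for the upper bound, and then attain equality with the feedback control $(\pi^*,c^*)$ via the Lipschitz/linear-growth assumptions and moment bounds. The only cosmetic differences are your choice of localizing sequence (bounding the state rather than the local martingale) and your more explicit treatment of the reflection term $\nabla_y\widetilde v^\top d\Phi$; note, however, that in the reverse direction the paper localizes again and invokes dominated convergence via \eqref{eqn:v_poly_growth}, whereas your direct ``true martingale'' claim would strictly require polynomial growth of $\nabla\widetilde v$, which is not assumed --- so the paper's second localization is the safer route.
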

\begin{remark}
    $\widetilde \pi^*,\widetilde c^*$ and $c^*$ are $\Fcal^B$ measurable, but the optimal control $\pi^*$ is not. 
\end{remark}

\begin{proof}
    We fix initial values $(t,x,w) \in D$ and
    let $(\pi,c) \in \Acal_T^\circ(t,x,w) $ be arbitrary. In the course of this proof, to simplify the exposition, we will write $Z(t) = (X_{()}(t),V^w_{\pi,c}(t))$.  Applying It\^o's formula to $\widetilde v(s,Z(s))$, adding and subtracting $e^{-\beta s}U_1(c(s))ds$ and using the fact that $X_{()}$ satisfies \eqref{eqn:RSDE} gives
    \begin{equation}
    \begin{aligned}
        \label{eqn:ito_value_func} 
    d\widetilde v(s,Z(s)) = & \Big(\partial_t\widetilde v(s,Z(s))+ \widetilde H_{\widetilde \pi(s),
 c(s)}\big(s,Z(s),\nabla \widetilde v(s,Z(s)),\nabla^2 \widetilde v(s,Z(s))\big) - e^{-\beta s}U_1(c(s))\Big)ds \\
    & \quad +dM(s) + \nabla_y\widetilde v(s,Z(s))^\top d\Phi(s), 
    \end{aligned}
    \end{equation} 
where $dM(s) =  V^w_{\pi,c}(s)\partial_w \widetilde v(s,Z(s)) \pi(s)^\top \sigma(s,X(s))dW(s) + \nabla_y \widetilde v(s,Z(s))^\top \widetilde \sigma(s,X_{()}(s))dB(s)$ is a local martingale and $\widetilde \pi_k(s) = \sum_{i=1}^d \pi_i(s)1_{\{\Rcal_i(X(t)) = k\}}$ for every $k$. Here we used the fact that $\pi(s)^\top \mu(s,Z(s)) = \widetilde \pi(s)^\top \widetilde \mu(s,Z(s))$
and $\pi(s)^\top a(s,X(s))\pi(s) = \widetilde \pi(s)^\top \widetilde a(s,Z(s)) \widetilde \pi(s)$ to rewrite the Hamiltonian in terms of $(\widetilde \pi,c)$. 

By Assumption~\ref{ass:constraint_set} on the constraint set we have that $(\widetilde \pi(s),c(s)) \in \widetilde A(X_{()}(s),V^w_{\pi,c}(s))$ so we deduce from \eqref{eqn:HJB_ranked} that the sum of the first two terms in the $dt$ integral of the right hand side of \eqref{eqn:ito_value_func} are nonpositive. Additionally, the Neumann boundary condition \eqref{eqn:neumann_condition} ensures that the reflection terms vanish. As such, we obtain
\begin{equation} \label{eqn:value_func_bound} 
\widetilde v(u,Z(u)) \leq \widetilde v(t,x_{()},w) -\int_t^{u} e^{-\beta s}U_1(c(s))ds + M(u) -M(t)
\end{equation}
for all $t \leq u \leq T$.
We now define the localizing sequence of stopping times $\tau_n = \inf\{ t\geq0: M(t) \geq n\}$ and note that $M(\cdot \land \tau_n)$ is a martingale for every $n$. Evaluating \eqref{eqn:value_func_bound} at $u=
T \land \tau_n$ and taking expectation yields
\[\E_{t,x}[\widetilde v(T\land \tau_n,Z(T\land \tau_n))] \leq \widetilde v(t,x_{()},w) - \E_{t,x}\bigg[\int_t^{T \land \tau_n} e^{-\beta s} U_1(c(s))ds\bigg]. \] Sending $n \to \infty$ we obtain
\begin{align*}
\E_{t,x}[U_2(V^w_{\pi,c}(T))]   \leq \widetilde v(t,x_{()},w) - \E_{t,x}\bigg[\int_t^T e^{-\beta s}U_1(c(s))ds\bigg],
\end{align*}
where we used Fatou's lemma and the terminal condition from \eqref{eqn:HJB_ranked} on the left hand side and monotone convergence on the right hand side. 
This establishes that
\begin{equation} \label{eqn:upper_bound}
v(t,x,w) = \sup_{(\pi,c) \in \Acal_T^\circ(t,x,w)} \E_{t,x}\bigg[\int_t^T e^{-\beta s}U_1(c(s))ds + U_2(V^w_{\pi,c}(T))\bigg] \leq \widetilde v(t,x_{()},w).
\end{equation}
To obtain the reverse inequality we now consider the feedback control $(\pi^*(s),c^*(s))_{s\in[t,T]}$. Note that for any $x \in (0,\infty)^d$ and $(\widetilde \pi,c) \in \widetilde A(x_{()},w)$ we have that $(\pi,c) \in A(x,w)$ where $\pi_i = \sum_{k=1}^d\widetilde \pi_k1_{\{\mathcal{R}_i(x) = k\}}$ so that $(\pi^*,c^*)$ is an admissible control.
 Its wealth process $V^* = V^w_{\pi^*,c^*}$ has dynamics
\begin{align*}
    dV^*(s) & = \Big(V^*(s)\Big((r+\widetilde \pi^*(s,X_{()}(s),V^*(s))^\top(\widetilde \mu(s,X_{()}(s))-r{\bf 1}_d)\Big) - \widetilde c^*(s,X_{()}(s),V^*(s))\Big)ds\\
    & \qquad +V^*(s)\widetilde \pi^*(s,X_{()}(s),V^*(s))^\top \widetilde \sigma(s,X_{()}(s))dB(s).
\end{align*}
Since $X$ is an autonomous uncontrolled state process this SDE can be written in the form
\begin{equation} \label{eqn:V*_SDE}
dV^*(s) = \alpha(V^*(s),\omega)ds + \nu(V^*(s),\omega)dB(s), 
\end{equation} 
where the coefficients $\alpha$ and $\nu$ satisfy uniform Lipschitz and linear growth conditions
courtesy of \eqref{eqn:Lipschitz_control}.  As such, by the standard It\^o theory for SDEs there exists a pathwise unique solution $V^*$ to \eqref{eqn:V*_SDE}, which satisfies the bound
\begin{equation} \label{eqn:Lp_V*} 
\E_{t,x}\Big[\sup_{t \leq s \leq T}|V^*(s)|^p\Big] \leq C(1+|w|^p)
\end{equation}
for every $T \geq 0$, $p \geq 1$ and some constant $C>0$. 
Now, letting $Z^*(t) = (X_{()}(t),V^*(t))$ we proceed by It\^o's formula to see that
\[d\widetilde v(s,Z^*(s)) = e^{-\beta s} U_1(c^*(s,Z^*(s)))ds + dM^*(s),\] where $dM^*(s) =  (V^*(s)\partial_w \widetilde v(s,Z^*(s)) \widetilde \pi^*(s,Z^*(s)) + \nabla_y \widetilde v(s,Z^*(s)))^\top \widetilde \sigma(s,Z^*(s))dB(s)$ is a local martingale, we used the optimality of $(\pi^*,c^*)$ as in \eqref{eqn:optimal_control_argmax} and the HJB \eqref{eqn:HJB_ranked} to handle the Hamiltonian term, 
and, as before, we used the Neumann boundary condition \eqref{eqn:neumann_condition} to ensure that the reflection terms vanish. Defining the stopping times $\tau_n^* = \inf\{s \geq 0: M^*(s) \geq n\}$ we see that 
\[\E_{t,x}[\widetilde v(T\land \tau_n^*,Z^*(T \land \tau^*_n))] = \widetilde v(t,x_{()},w) - \E_{t,x}\bigg[\int_t^{T \land \tau_n^*} e^{-\beta s}U_1(c^*(s,Z^*(s)))ds\bigg].\]
By the polynomial growth condition \eqref{eqn:v_poly_growth} of $\widetilde v$ together with the $L^p$ bounds \eqref{eqn:Lp_RSDE} and \eqref{eqn:Lp_V*} we can send $n \to \infty$ and use dominated convergence on the left-hand side together with the terminal condition in \eqref{eqn:HJB_ranked} and monotone convergence on the right-hand side to obtain
\[\E_{t,x}[U_2(V^*(T))] = \widetilde v(t,x_{()},w) - \E_{t,x}\bigg[\int_t^Te^{-\beta s}U_1(c^*(s,Z^*(s)))ds\bigg].\] Since $(\pi^*,c^*) \in \Acal_T^\circ(t,x,w)$ we deduce that $v(t,x,w) \geq \widetilde v(t,x_{()},w)$, which 
together with \eqref{eqn:upper_bound} shows that $v(t,x,w) = \widetilde v(t,x_{()},w)$ and that $(\pi^*,c^*)$ is an optimal control.
\end{proof}
\section{First-order models} \label{sec:first_order}
Here we assume constant coefficients by setting $\widetilde \mu(t,y) = \widetilde \mu$ and $\widetilde \sigma(t,y) = \widetilde \sigma$ for some $\widetilde \mu \in \R^d$ and $\widetilde \sigma \in \mathbb{S}^d_{++}$, the space of symmetric positive definite $d \times d$ matrices. This model for $X$ is known as a \emph{first order model} and was first proposed in \cite[Chapter~5.5]{fernholz2002stochastic}. As this model has constant drift and volatility coefficients it is reminiscent of the multivariate version of Merton's problem, though first-order models prescribe constant coefficients for the ranked capitalizations rather than their named counterparts. 

One criticism of Merton's problem is the difficulty in estimating the drift parameters for the named assets. This problem is mitigated in first-order models as the rank-based drift parameters can be estimated through the collision local times appearing in the reflection term \eqref{eqn:reflection_term} (see \cite[Chapter~5.5]{fernholz2002stochastic} for more details). Moreover, as we will see in the following  subsection, the optimal investment-consumption problem in first-order models admits an explicit solution which shares similarities to the optimal strategy in Merton's problem.  As such, the financial intuition, conclusions and tractability of Merton's problem carry over to the present setting. Moreover, we are able to obtain an explicit solution under open market constraints, which is not a tractable constraint in the standard multivariate version of Merton's problem. In this section we work with the family of power utility functions given for risk-aversion parameter $\gamma \in (0,1) \cup (1,\infty)$ and for $w > 0$ by
\[U_1(w) = U_2(w) = U(w) =
\frac{1}{1-\gamma}w^{1-\gamma}.
\]
\subsection{Unconstrained problem} \label{sec:unconstrained}
Here we let the constraint set $\widetilde A(x,w) = \R^d \times [0,\infty)$. In this case the HJB \eqref{eqn:HJB_ranked} becomes precisely the HJB for Merton's problem except on the smaller domain $D_{\geq}$, rather than $D$, and with the addition of the Neumann boundary conditions \eqref{eqn:neumann_condition}. It is well-known that the solution to the HJB for Merton's problem is independent of the asset price variable. Hence, the same solution satisfies \eqref{eqn:HJB_ranked} and \eqref{eqn:neumann_condition}. Indeed, it is straightforward to verify that
\[\widetilde \pi^* = \frac{1}{\gamma}\widetilde a^{-1}(\widetilde \mu - r{\bf 1}_d), \qquad \widetilde c^*(t,w) =e^{-\frac{\beta}{\gamma} t}f(t;\nu)w, \qquad \widetilde v(t,w) = f(t;\nu)^\gamma \frac{w^{1-\gamma}}{1-\gamma},\]
where 
\begin{equation} \label{eqn:f}
    f(t;\nu) = e^{\frac{\nu}{\gamma}(T-t)}+\frac{\gamma}{\nu-\beta} e^{-\frac{\nu}{\gamma}t}\Big(e^{\frac{\nu-\beta}{\gamma}T} - e^{\frac{\nu-\beta}{\gamma}t}\Big)
    \ \text{ and } \  \nu = (1-\gamma)\Big(r+\frac{1}
    {2\gamma}(\widetilde \mu - r{\bf 1}_d)^\top \widetilde a^{-1}(\widetilde \mu -r{\bf 1}_d)\Big) 
\end{equation}
satisfy \eqref{eqn:HJB_ranked} and \eqref{eqn:neumann_condition}.\footnote{If $\nu = \beta$ the correct expression is obtained by taking the limit as $\nu \to \beta$ in \eqref{eqn:f}.} As such, as a consequence of Theorem~\ref{thm:verification}, we obtain that the value function is given by $v = \widetilde v$, since the latter is independent of $y$, and the optimal controls are of feedback form, pinned down by the functions
\[\pi^*_i(x) = \frac{1}{\gamma}\sum_{k=1}^d \widetilde \pi^*_k1_{\{\mathcal{R}_i(x) = k\}}, \qquad c^*(t,w) = f(t;\nu)^\gamma \frac{w^{1-\gamma}}{1-\gamma}.\]
Thus the value function and optimal consumption rate in the first-order models are precisely the same as in Merton's problem when the assets have parameters $\widetilde \mu$ and $\widetilde \sigma$. The optimal investment strategy $\pi^*$ also involves the constant Merton fractions $\widetilde \pi^* = \frac{1}{\gamma}\widetilde a^{-1}(\widetilde \mu - r {\bf 1}_d)$ though, unlike in Merton's problem, they prescribe investment in the ranked assets.
\subsection{Open market constraints} \label{sec:open_markets}
Here we study the case when the investor is only allowed to invest in assets whose rank is in between $n$ and $N$ for some integers $1 \leq n \leq N \leq d$. Open markets allow the assets available for investment to change over time. This offers a tractable way to incorporate a changing investment universe into the analysis. Moreover, it can serve as a proxy for investors who restrict themselves to certain subsets of the available investment universe. For example taking $n = 1$ and $N= 500$ restricts investment to the largest 500 assets in the market, which can serve as a proxy for an investor who chooses to only invest in assets that make up the S\&P 500. 

As in pare \ref{item:open_market} of Example~\ref{eg:constraints} and the discussion following, the open market constraint enforces $\widetilde \pi_k = 0$ whenever $ k \not \in [n,N]$. As such, making the ansatz that $\widetilde v$ is independent of $y$ again, the Hamiltonian in the HJB \eqref{eqn:HJB_ranked} becomes
\begin{equation} \label{eqn:open_market_Hamiltonian}
    \sup_{\eta \in \R^{N-n+1}}\{w\partial_w \widetilde v(r+ \eta^\top(\widetilde \mu_{[n:N]}-r{\bf 1}_{N-n+1})) + \frac{w^2}{2}\partial_{ww}\widetilde v \eta^\top \widetilde a_{[n:N]}\eta\} + \sup_{c \geq 0} \{-c\partial_w\widetilde v +e^{-\beta t}U(c)\},
\end{equation}
where $\eta$ corresponds to the non-zero entries of $\widetilde \pi$ and $\widetilde \mu_{[n:N]}$, $\widetilde a_{[n:N]}$ are  the truncated vector $(\widetilde \mu_k)_{n \leq k \leq N}$ and matrix $(\widetilde a_{k\ell})_{n \leq k,\ell \leq N}$ respectively. We recognize that \eqref{eqn:open_market_Hamiltonian} corresponds precisely to the Hamiltonian in Merton's problem when there are $N-n+1$ risky assets and they possess drift vector $\widetilde \mu_{[n:N]}$ and diffusion matrix $\widetilde a_{[n:N]}$. As such, in a similar fashion as to the unconstrained case, we see that the value function is given by
\[v(t,w) = f(t;\nu_{[n:N]})^\gamma \frac{w^{1-\gamma}}{1-\gamma} \]
and the optimal controls are given by the feedback functions 
\begin{align*}
\pi^*_i(x) & = \sum_{k=n}^N\widetilde \eta^*_{k-n+1}1_{\{\mathcal{R}_i(x) = k\}}, \quad \text{where} \quad \widetilde \eta^* = \frac{1}{\gamma}(\widetilde a_{[n:N]})^{-1}(\widetilde \mu_{[n:N]} - r{\bf 1}_{N-n+1}),\\
 c^*(t,w) & = e^{-\frac{\beta}{\gamma}t} f(t;\nu_{[n,N]})w, 
\end{align*}
where $f$ is as in \eqref{eqn:f} and
\[\nu_{[n:N]} =(1-\gamma)\Big(r+\frac{1}
    {2\gamma}(\widetilde \mu_{[n:N]} - r{\bf 1}_d)^\top (\widetilde a_{[n:N]})^{-1}(\widetilde \mu_{[n:N]} -r{\bf 1}_d)\Big)  \]
The upshot is that open market constraints, which are intractable in Merton's problem, admit an explicit solution in first-order models. 
Moreover, the rank-based Merton fraction $\widetilde \eta^*$ prescribes the optimal holdings. In particular the optimal strategy does not depend on $\widetilde \mu_k$ or $\widetilde a_{k\ell}$ for $k,\ell \not \in [n,N]$.

\subsection{Fully invested constraints} \label{sec:fully_invested}
Here we look at the combined open market and fully invested constraints characterized by 
\[\widetilde A = \{\widetilde \pi \in \R^d: \widetilde \pi_k = 0 \text{ if } k \not \in [n,N] \text{ and } \widetilde \pi^\top {\bf 1}_d = 0\} \times [0,\infty).\] Again writing $\eta  \in \R^{N-n+1}$  for the nonzero components of $\widetilde \pi$ we obtain the same Hamiltonian \eqref{eqn:open_market_Hamiltonian} with the additional constraint that $\eta^\top {\bf 1}_{N-n+1} = 1$. The problem for $c$ is as before, while the problem  for $\eta$ is a quadratic programming problem with linear constraints and can be solved explicitly. Hence, we obtain
\[\eta^* = (a_{[n:N]})^{-1}\bigg(-\frac{\partial_w \widetilde v}{w\partial_{ww}\widetilde v} \widetilde \mu_{[n:N]} + \frac{(1+\frac{\partial_w \widetilde v}{w\partial_{ww}\widetilde v}{\bf 1}^\top (\widetilde a_{[n:N]})^{-1}\widetilde \mu_{[n:N]}}{{\bf 1}^\top (\widetilde a_{[n:N]})^{-1}{\bf 1}}{\bf 1}\bigg), \quad c^* = e^{-\beta t}(\partial_w \widetilde v)^{-\frac{1}{\gamma}}, \]
where we write ${\bf 1}$ in place of ${\bf 1}_{N-n+1}$ for brevity.
Next we make the standard ansatz $\widetilde v(t,w) = u(t)\frac{w^{1-\gamma}}{1-\gamma}$ for an unknown function $u$, which leads to the ODE
\[0 = u'(t) + \zeta u(t) + \gamma e^{-\frac{\beta}{\gamma} t}u(t)^{1-\frac{1}{\gamma}}; \qquad u(T) = 1,\]
where
\[\zeta = \frac{1-\gamma}{2\gamma}\bigg(\widetilde \mu_{[n:N]} - \frac{\gamma - {\bf 1}^\top (\widetilde a_{[n:N]})^{-1}\widetilde \mu_{[n:N]}}{{\bf 1}^\top (\widetilde a_{[n:N]})^{-1}{\bf 1}}{\bf 1}\bigg)^\top (\widetilde a_{[n:N]})^{-1}\bigg(\widetilde \mu_{[n;N]} + \frac{\gamma - {\bf 1}^\top (\widetilde a_{[n:N]})^{-1}\widetilde \mu_{[n:N]}}{{\bf 1}^\top (\widetilde a_{[n:N]})^{-1}{\bf 1}}{\bf 1}\bigg). \]
This ODE has solution $u(t) = f(t;\zeta)^\gamma$, where $f$ is given by \eqref{eqn:f}. As such, the optimal controls are given by the feedback functions
\begin{align*}
    &\pi^*_i(x)  = \sum_{k=n}^N\widetilde \eta^*_{k-n+1}1_{\{\Rcal_i(x)=k\}} \ \text{ with } \ \widetilde \eta^* = \frac{1}{\gamma}(\widetilde a_{[n:N]})^{-1}\bigg(\widetilde \mu_{[n:N]} + \frac{\gamma-{\bf 1}^\top (\widetilde a_{[n:N]})^{-1}\widetilde \mu_{[n:N]}}{{\bf 1}^\top (\widetilde a_{[n:N]})^{-1}{\bf 1}}{\bf 1}\bigg), \\
    & c^*(t,w)  = e^{-\frac{\beta}{\gamma}t} f(t;\zeta)w.
\end{align*}



\bibliographystyle{plain}
\bibliography{references}

\end{document}